\renewcommand\onecolumngrid{
\do@columngrid{one}{\@ne}%
\def\set@footnotewidth{\onecolumngrid}
\def\footnoterule{\kern-6pt\hrule width 1.5in\kern6pt}%
}
\renewcommand\twocolumngrid{
        \def\footnoterule{
        \dimen@\skip\footins\divide\dimen@\thr@@
        \kern-\dimen@\hrule width.5in\kern\dimen@}
        \do@columngrid{mlt}{\tw@}
}%
\newtheorem{definition}{Definition}
\newtheorem{theorem}{Theorem}
\newtheorem{nontheorem}[theorem]{Non-theorem}
\newcommand{\norm}[1]{\left\|#1\right\|}
\newcommand{\set}[1]{\left\lbrace #1 \right\rbrace}
\newcommand{\ketbra}[2]{\left\vert#1\right\rangle\left\langle#2\right\vert}
\newcommand{\denotation}[1]{\left\llbracket#1\right\rrbracket}
\begin{document}

\title{Reasoning across spacelike surfaces in the Frauchiger-Renner thought experiment}
\author{Jeffrey M. Epstein}
\email{jeffrey.m.epstein@gmail.com}

\date{\today}

\begin{abstract}
The Frauchiger-Renner argument purports to show that the standard framework of quantum mechanics yields a contradiction when used to reason about systems containing agents who are themselves using quantum mechanics to perform deductions. This has been framed as an obstacle to taking quantum mechanics to be a complete theory. I formalize the argument in two closely related ways and elucidate the source of the paradox, clarifying the flaw in the original argument.
\end{abstract}

\maketitle
\onecolumngrid

\section{Introduction}
	The standard presentation of quantum mechanics distinguishes between two classes of basic entity: systems and observers. Systems are taken to evolve unitarily (or as part of a collection of systems jointly evolving unitarily) until measured by an observer, at which point something discontinuous may occur. Observers are simply those things capable of making observations. This dichotomy is left unexplained in the standard presentation. In fact, it goes against the basic intuitions most of us have about the universe. Assuming some form of materialism and reductionism, we model observers as being ultimately composed of systems. As quantum mechanics tells us exactly how to deal with composites of systems, it remains unclear how a distinction between systems and observers could arise. Without this distinction, however, it is unclear how the experience of definite measurement outcomes could arise. This tension is a basic motivator for a large body of interesting work in quantum foundations and the philosophy of quantum mechanics, and indeed is something that most people who have learned quantum mechanics have probably wondered about.
	
	There is a long history of thought experiments designed to address this problem. Schr\"odinger's cat (as analyzed by Schr\"odinger himself as well as many others) is an early attempt at precising this confusion, and Wigner continued in this tradition with the so-called Wigner's Friend thought experiment. In the past few years, there has been extensive discussion centered on an ingenious thought experiment of Frauchiger and Renner, first proposed by \cite{Frauchiger2018} and explicated somewhat more accessibly by \cite{Nurgalieva2020} and \cite{Bub2021}. They frame their protocol in terms of an apparent paradox arising when systems capable of using quantum mechanics to perform deductions are themselves treated quantum mechanically.
	
	In this essay, I will attempt to add to this conversation by providing a formalization of the Frauchiger-Renner paradox.  I argue that the proper way to understand the result is as a consequence of the improper chaining of inferences along different spacelike surfaces, in a way that I will make precise. In my view, this demonstrates the core error in one of the two apparently contradictory deductive procedures presented by Frauchiger and Renner to analyze their protocol, and removes any confusion arising from the less formal analysis of the original presentation. I begin in Section \ref{sec:protocol} by describing the protocol involved in the thought experiment. In \ref{sec:FRargs}, I summarize the arguments of Frauchiger and Renner that produce the apparent contradiction. In \ref{sec:analysis}, I introduce a framework that allows more formal reasoning about the protocol, and reframe the protocol as a counterexample to two apparently reasonable theorems asserting the validity of chained inferences within quantum mechanics. The first of these makes contact with the Everettian relative state framework, and the second with the program of quantum logic. In both cases, the corresponding theorem holds if we restrict to a single spacelike surface, so that we may identify the source of the paradox in the use of multiple surfaces in a single deduction.
	
	\section{Frauchiger-Renner Protocol}
	\label{sec:protocol}
	In this section I recapitulate the thought experiment (almost) as presented by Frauchiger and Renner. The protocol involves four agents\footnote{I leave out the systems $\bar{A}$ and $\bar{B}$ corresponding to Alice's and Bob's labs, or rather include these in $A$ and $B$. It does not seem to me that separating these out serves any purpose in the argument, and after all $A$ and $B$ themselves are already enormously complicated systems, supposed to include all of the sensory and cognitive equipment used by the agents. When Ursula measures $A$ and $R$, she is meant to be performing a projective measurement on this entire system, i.e. on all physical degrees of freedom with which $S$ has become entangled after the second step of the protocol.}, Alice, Bob, Ursula, and Wigner, and two qubits, R and S, prepared in the states $\sqrt{1/3}\ket{0}+\sqrt{2/3}\ket{1}$ and $\ket{0}$, respectively. It consists of the following steps:
	\begin{enumerate}
		\item Alice measures R in the computational basis.
		\item If she obtains the outcome 0, she does nothing. If she obtains the outcome 1, she applies a Hadamard gate to S.
		\item Bob measures S in the computational basis.
		\item Ursula measures the joint system composed of Alice and R in a basis containing the states 
		\begin{align}
			\ket{\pm}_{RA}=\left(\ket{0}_R\ket{0}_A\pm\ket{1}_R\ket{1}_A\right)/\sqrt{2},
		\end{align}
		where $\ket{0/1}_A$ is the state describing Alice having obtained the outcome $0/1$ upon performing her measurement\footnote{This state may also be taken to represent the entire state of Alice's brain after obtaining the measurement outcome \textit{and making deductions based on this outcome}. The actual process of making such deductions could be separated out from the operation $M_A$ and independently described by a unitary operator acting on Alice alone, or at a more granular level acting on two (or more) systems representing Alice's observation of the outcome $r$ and the results of her reasoning based on this outcome. This further granularity does not seem to me to add anything to the argument.}.
		\item Wigner measures the joint system composed of Bob and S in a basis containing the states $\ket{\pm}_{SB}$, defined analogously.
	\end{enumerate}
	We assume that the four agents may all be described as initialized in some state $\ket{\bot}$ corresponding to their situation prior to observing measurement outcomes and making deductions\footnote{A natural concern arises regarding the definedness of such states. After all, any description of these states in terms of basic quantities like spin, momentum, and position will be enormously complex. This is not a problem here -- all that is required is the claim that there are some quantum states corresponding to the states of affairs prior to measurement, and following each measurement outcome, and that states corresponding to classically distinct states of affairs are orthogonal. Put another way, these states are as well defined as the spin up and down states of an electron -- we don’t ever really say anything about these states beyond naming them. If we take the states to be complete descriptions of the state of affairs, then that’s all there is to it. Of course the question of phenomenological aspects of a state corresponding to an agent’s knowledge or beliefs is entirely mysterious, but this is true in a classical theory of conscious agents as well.}. The Hilbert space corresponding to each of the agents is at least large enough to include this state as well as the states $\ket{0}$ and $\ket{1}$. This protocol may be illustrated graphically in terms of the circuit diagram of Fig. \ref{Fig:causal_structure}. For the moment only the wires and boxes of the diagram are meaningful.

	\section{Paradox according to Frauchiger and Renner}
	\label{sec:FRargs}
	The purpose of the protocol designed by Frauchiger and Renner is to demonstrate an apparent paradox that arises when using quantum mechanics to reason about involving systems considered both (1) as agents capable of performing measurements of other systems\footnote{In the sense of updating their state based on the state of a system to be observed, \textit{not} necessarily in the sense of causing an objective collapse of the wavefunction.} and performing deductions using the framework of quantum mechanics, and (2) as quantum systems in their own right, which may themselves be measured in arbitrary bases by other agents. To this end, consider the situation in which Ursula observes the outcome $\ket{-}$ upon performing her measurement of the joint system consisting of Alice and the qubit R. We can make two different arguments to establish the possibility or impossibility of Wigner measuring the same outcome in this situation\footnote{In fact, in the original protocol, Ursula is supposed to ``classically communicate" the outcome of her reasoning to Wigner, who will then make two deductions about the value of his measurement, one based on the principle that he is correct about his own observations, and the other based on the principle that it is valid to adopt another agent's valid deductions. It does not seem to me that this further step adds anything to the paradox, and in fact I believe it obscures what I will later argue is the proper way to understand the origin of the apparent contradiction by treating the two Arguments I will present here as being of different types, whereas I view both as having, at the end of the day, the same form. Nevertheless, Frauchiger and Renner's original inclusion of this final communication from Ursula to Wigner does have the benefit of in some way internalizing the apparent paradox to the system by localizing it to Wigner, who allegedly must believe a contradiction with nonzero probability.}. We shall see that the first argument appears to prove that Ursula ought to deduce that Wigner must make/must have made\footnote{In fact the tense ambiguity is core to the perspective on the thought experiment proposed in this essay.} the opposite observation, while the second appears to prove that she should conclude that either observation is possible. Argument 1 proceeds as follows:\\
    \begin{addmargin}[2em]{2em}
	\textit{Suppose that Ursula obtains the outcome corresponding to the state $\ket{-}_{RA}$. Putting ourselves in her shoes, we observe that she can reason as follows: \\
		\begin{addmargin}[2em]{2em}
			After the first three steps of the protocol, the system composed of the two spins and the two agents Alice and Bob is in the state
			\begin{align}
				\sqrt{\frac{2}{3}}\ket{+}_{RA}\ket{0}_S\ket{0}_B+\sqrt{\frac{1}{3}}\ket{1}_{R}\ket{1}_A\ket{1}_S\ket{1}_B.
			\end{align}
			This state is orthogonal to the states $\ket{-}_{RA}\ket{0}_S\ket{0}_B$ and $\ket{-}_{RA}\ket{1}_S\ket{0}_B$, and therefore, since I measured $\ket{-}_{RA}$, it must be the case that Bob measured $\ket{1}_S$. Now I can put myself in Bob's shoes. He can reason as follows:\\
			\begin{addmargin}[2em]{2em}
				The state of S that Alice would have prepared had she measured R to be in the state $\ket{0}_R$ has no overlap with the state $\ket{1}_S$, so had she measured $R$ to be in $\ket{0}_R$, I couldn't have measured $\ket{1}_S$. Therefore, she must have measured R to be in the state $\ket{1}_R$. Now I can put myself in Alice's shoes. She can reason as follows:\\
				\begin{addmargin}[2em]{2em}
					I measured the state $\ket{1}_R$, so I will prepare S in the state $\ket{+}_S$, which Bob will then measure, so that he and S end up in the joint state $\ket{+}_{SB}$. This is an eigenstate of Wigner's measurement, so Wigner will certainly obtain the plus outcome upon making his measurement.\\
				\end{addmargin}
				Alice is reasoning according to the framework of quantum mechanics, so her conclusion must be sound, and Wigner must obtain the plus outcome.\\
			\end{addmargin}
			Bob is reasoning according to the framework of quantum mechanics, so his conclusion must be sound, and Wigner must obtain the plus outcome.\\
		\end{addmargin}
		Ursula is reasoning according to the framework of quantum mechanics, so her conclusion must be sound, and Wigner must obtain the plus outcome if Ursula obtains the minus outcome.}\\
        \end{addmargin}
	Argument 2 is far more direct:\\
    \begin{addmargin}[2em]{2em}
	\textit{
		Applying the standard machinery of quantum to the composite system consisting of the four agents and two qubits, we find that after the protocol is complete, the system is in the state
		\begin{align}
			\Psi&=\frac{\sqrt{3}}{2}\ket{+}_U\ket{+}_{RA}\ket{+}_{SB}\ket{+}_W+\sqrt{\frac{1}{12}}\ket{+}_U\ket{+}_{RA}\ket{-}_{SB}\ket{-}_W\label{eq:final_state}\\
			\nonumber&\hspace{30pt}-\sqrt{\frac{1}{12}}\ket{-}_U\ket{-}_{RA}\ket{+}_{SB}\ket{+}_W+\sqrt{\frac{1}{12}}\ket{-}_U\ket{-}_{RA}\ket{-}_{SB}\ket{-}_W
		\end{align}
		This state has non-zero support on the subspace in which both Ursula and Wigner obtain the outcome $\ket{-}$, and therefore it is possible that if Ursula obtains this outcome, Wigner does as well. }\\
	\end{addmargin}

    The additional classical communication step included in FR's original presentation simply requires Ursula to communicate her prediction of Wigner's outcome to Wigner, so that in the event that both agents measure the outcome $\ket{-}$, Wigner must conclude both that he must have measured $\ket{+}$, by Argument 1, and that he has measured $\ket{-}$, by a fairly natural assumption about what it means to make a measurement. Because by Argument 2 it is in fact possible that both agents do measure $\ket{-}$, this contradiction is in some way ``realized". Clearly, these two arguments produce an absurd conclusion when taken together. Moreover, they both at least appear to be constructed out of valid deductions. 
	
	In my view, insofar as the four agents may validly be called quantum systems, and insofar as the measurement and state preparation operations may be described by unitary operators acting properly on the relevant basis states, the second argument must hold. If it does not, and the wavefunction \eqref{eq:final_state} does not describe the final state of the composite system, then it simply is not the case that that system can be described by standard quantum mechanics. Moreover, if we take Ursula to be reasoning based on Argument 1 (meaning that the unitary operator responsible for updating her brain based on the outcome of her measurement causes evolution consistent with that argument), then there is non-zero support of the final wavefunction on a subspace in which she is \textit{just wrong} about Wigner's observation, insofar as we can identify her mental state of belief with the corresponding wavefunction. Thus, were we to perform this experiment and find that in fact Ursula's reasoning according to Argument 1 is correct (so that whenever she observes the outcome minus, Wigner does not), this would provide evidence that standard quantum mechanics is not valid to describe this system. This would be very odd indeed, as presumably we would have performed separate experiments to validate the use of the unitary operators $M$ and $P_A$ to describe the processes of measurement and state preparation of the individual agents, ruling out the possibility that each of the agents did in fact cause objective collapse upon performing a measurement.
	
	It is all very well to say that Argument 2 must be correct if we assume that the composite system, including the four agents, may be described by quantum mechanics --- indeed that this is \textit{just what it means} to make this claim --- but we are left with the plausibility of Argument 1. Where exactly does it go wrong? Clarifying this point via formalization is the purpose of this essay, and we turn now to this task.
	
	\section{Analysis in terms of incompatible spacelike surfaces}\label{sec:analysis}

	Each of the steps of both Arguments presented above has the same form: an agent $x$, finding himself in state $\psi_x$ at time $t_x$, infers the state $\psi_y$ of another agent $y$ at some time $t_y$. In particular, Argument 1 has the following form: \\
	\begin{addmargin}[2em]{2em}
		\textit{Given that after performing her measurement she is in the state corresponding to the minus outcome, Ursula assigns a state to Bob after his measurement and before Wigner's. Given that he is in this state at this time, Bob assigns a state to Alice after she prepares the system S and before Ursula performs her measurement. Given that she is in this state at this time, Alice assigns a state to Wigner after his measurement.}\\
	\end{addmargin}
	In other words, each step of the deduction corresponds to the determination of the relative state of one agent relative to another, in the sense of \cite{Everett1957}. In fact, it does not seem critical that we actually ascribe these deductions to the agents themselves. We may take a step back from this identification and say ``relative to Ursula, Bob is in the state $\psi$" instead of ``Ursula assigns the state $\psi$ to Bob"\footnote{In a response of \cite{Lazarovici2019} to the Frauchiger-Renner paper -- they argue that the essential error of Argument 1 is that Alice makes an inference based on the assumption that her measurement has collapsed the wavefunction, whereas Ursula's measurement brings the two possible branches back into superposition -- the authors provide the following defense of their approach:\\
		\begin{addmargin}[2em]{2em}
			While the argument of Frauchiger and Renner is concerned with inferences of agents participating in the experiment (and inferences of agents about the inferences of other agents), we make a conscious choice not to take these perspectives but describe the experiment in objective terms. Some readers may worry that this misses the point of the Frauchiger-Renner no-go theorem. But then the point of the Frauchiger-Renner no-go theorem is not a good one to begin with. If different “agents” make inconsistent predictions by applying a quantum theory that makes a consistent prediction, it can only mean that at least one of the agents applies the theory incorrectly.\\
		\end{addmargin}
		I would elaborate on this and say that insofar as the motivation behind the FR though experiment is to probe the possibility of quantum mechanics as a complete theory, and assuming we are materialists about the mind in at least a weak sense, then ``Alice knows/believes/predicts $\phi$ is just another proposition of the same type as $\phi$ itself, from the point of view of an outside observer, and if we are not allowed to take such an external point of view, then it is hard to see how we could talk about a complete theory in the first place.\label{footnote:external_observer}}.

A crucial feature of the argument is that each relative state determination can be associated to a slice through the circuit, and the validity of Argument 1 relies on being able to chain these relative state determinations together. The surfaces used for the various deductive steps are illustrated in Fig. \ref{Fig:causal_structure}, and each determines a universal wavefunction in the Everettian sense, with respect to which relative states may be computed. Note that these surfaces are incomparable with respect to the partial order induced by the operations, and there is no single embedding of the circuit in spacetime that makes all of them surfaces of simultaneity. This point is perhaps somewhat obscured by the careful assignment of specific timestamps to each of the steps of the protocol in the original presentation. 

Now we introduce some formal machinery for analyzing the relevant structure of this type of situation in a streamlined fashion. This will allow the formalization of the two Arguments in a rigorous way, facilitating clarification of the contradiction.

\begin{definition}
A causal structure is a finite directed acyclic graph in which edges are allowed to have both a source and a target, only a source, only a target, or neither.
\end{definition}
In other words, a causal structure is a directed acyclic graph where edges are allowed to ``come in from negative infinity" or ``go off to positive infinity". The edges are meant to represent physical systems or degrees of freedom, and the vertices to represent events involving subsets of these. It is also possible to define a notion of a spacelike surface for a causal structure, which should be understood as motivated by the relativistic notion\footnote{Within special relativity, spacelike surfaces are precisely the possible surfaces of simultaneity, those surfaces for which there is some observer whose reference frame assigns the same time to each point on the surface. In the absence of any particular observer, nothing privileges any one of those surfaces over any other. In the setting of causal structures, spacelike surfaces play the same role. Prior to choosing a particular embedding of the causal structure into spacetime, and an observer in that spacetime, it is meaningless to ask which spacelike surfaces are equal time surfaces, but non-spacelike surfaces can never be equal time surfaces. For example, there is no choice of embedding and observer that puts Alice just prior to her measurement and Wigner just after his measurement at the same time. Even though in their original paper FR do assign times to each of the operations of the protocol -- thus specifying both an embedding and an observer -- their reasoning relies on treating all these surfaces equivalently.}. Because we have stipulated that the graphs under consideration are finite, we can define these inductively:
\begin{definition}
For a fixed causal structure $\mathcal{G}$, let $S_{0}$ be the set of edges with no source. The set of spacelike surfaces of $\mathcal{G}$ is built up inductively by including $S_0$ and declaring that if $S$ is a spacelike surface and for some vertex $v$, $S$ contains all edges for which $v$ is a target, then the set obtained by removing these edges from $S$ and adding all edges for which $v$ is a source is also a spacelike surface. These are the only spacelike surfaces.
\end{definition}
Finally we need a way to associate Hilbert spaces and quantum states to each of these spacelike surfaces in a way consistent with the causal structure and the unitarity of quantum mechanics.
\begin{definition}
Given a causal structure $\mathcal{G}$, a consistent state assignment is a pair $(\mathcal{H},\Psi)$. $\mathcal{H}$ assigns to each edge of $\mathcal{G}$ a finite-dimensional Hilbert space. For each vertex $v$ of $\mathcal{G}$, the product of the dimensions of spaces assigned to the inputs of $v$ equals the product of dimensions of spaces assigned to the outputs. For any spacelike surface $S$ and any subset $\mathcal{A}\subset S$, denote by $\mathcal{H}_\mathcal{A}$ the tensor product of the spaces assigned to the edges in $\mathcal{A}$. Then $\Psi$ assigns to each spacelike surface $S$ a vector $\Psi_S\in\mathcal{H}_S$ such that for any pair of spacelike surfaces $S$ and $S'$, $\Psi_S$ and $\Psi_{S'}$ have identical reduced density operators on $\mathcal{H}_{S\cap S'}$.
\end{definition}
One way to obtain a consistent state assignment once the map $\mathcal{H}$ has been specified is to pick a quantum state assigned to the surface $S_0$, and simply evolve it forward by unitary operators supported on each vertex. In this way, any quantum circuit defines a consistent state assignment. Conversely, any consistent state assignment may be realized in this way\footnote{Note that not all partial consistent state assignments, i.e. choices of consistent states for a subset of spacelike surfaces, may be extended to full consistent state assignments. Consider for example the structure consisting of two distinct vertices, each with a single input and a single output edge. Take the state assigned to the two input edges to be a product state and the state assigned to the two output edges to be entangled. Then any state assigned to the surfaces consisting of the input of one event and the output of the other would have to have different von Neumann entropies for each of its reduced states. This is impossible for a pure state.}.

\subsection{Frauchiger-Renner Argument in terms of relative states}
Equipped with this machinery, we are in a position to reason about the relative state computations along different surfaces involved in the Frauchiger-Renner analysis. We make this notion precise as follows:

\begin{definition}
	Let $\mathcal{G}$ be a causal structure and $(\mathcal{H},\Psi)$ a consistent state assignment. Let $A$ and $B$ be two edges of $\mathcal{G}$, and suppose that there is a spacelike surface $S$ containing $A$ and  $B$. Define the relative state map
	\begin{align}
		\Psi^{A\rightarrow B}:\,\phi\mapsto \mathcal{N}\text{tr}_{S\setminus\set{A,B}}\left[\left(\bra{\phi}_A\otimes 1_{B}\otimes 1_{S\setminus\set{A,B}}\right)\ketbra{\Psi_S}{\Psi_S}\left(\ket{\phi}_A\otimes 1_{B}\otimes 1_{S\setminus\set{A,B}}\right)\right],
	\end{align}
	where $\mathcal{N}$ is a normalization factor.
\end{definition}
We will freely view $\Psi^{A\rightarrow B}$ as a partial function $\mathcal{H}_A\rightarrow\mathcal{H}_B$. The definition does not depend on the choice of spacelike surface containing $A$ and $B$ if there is more than one. Now we can reframe the Frauchiger-Renner protocol as a counterexample to the following claim:
\begin{nontheorem}
	Let $\mathcal{G}$ be a causal structure and $(\mathcal{H},\Psi)$ a consistent state assignment. Let $A_1,  \ldots, A_n$ be edges of $\mathcal{G}$ and suppose that each consecutive pair is contained in some spacelike surface. Suppose also that there is a spacelike surface containing $A_1$ and $A_n$. Then for any $\phi\in \mathcal{H}_{A_1}$, if $\Psi^{A_{n-1}\rightarrow A_n}\circ\cdots\circ \Psi^{A_1\rightarrow A_2}(\phi)$ is defined and pure, it is equal to $\Psi^{A_0\rightarrow A_{n}}(\phi)$.
\end{nontheorem}

\begin{figure}
	\centering
	\includegraphics[width=.7\textwidth]{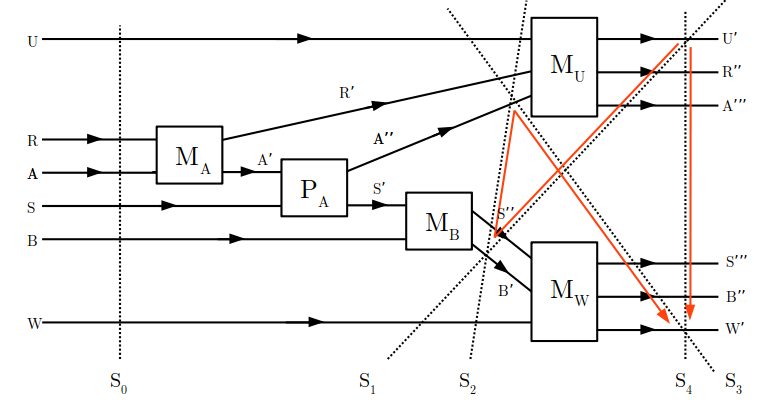}
	\caption{Causal structure of the Frauchiger-Renner protocol with relevant spacelike surfaces indicated. The red arrows pointing along these surfaces indicate the chains of inference used in the two Arguments.}\label{Fig:causal_structure}
\end{figure}

\begin{proof}
The causal structure $\mathcal{G}_{FR}$ is defined to be the structure illustrated in Fig. \ref{Fig:causal_structure}, corresponding to the Frauchiger-Renner protocol. The consistent state assignment $(\mathcal{H}^{FR},\Psi^{FR})$ is defined for the causal structure $\mathcal{G}_{FR}$ by assigning the Hilbert space with basis $\set{\ket{0}, \ket{1}}$ to each edge labeled with an $R$ or $S$, the space with basis $\set{\ket{0}, \ket{1}, \ket{\bot}}$ to each edge labeled with an $A$ or $B$, and the space with basis $\set{\ket{+}, \ket{-}, \ket{\bot}}$ to each edge labeled with a $U$ or $W$ in Fig. \ref{Fig:causal_structure}, and evolving the initial state $\Psi_{S_0}$ defined by the FR protocol forward via the unitary operations corresponding to each event. The states $\Psi_S$ for the spacelike surfaces of interest are given in Table \ref{tab:FRstates}.
\begin{table}[]
    \centering
	\begin{tabular}{lll}
		surface & subsystems & state\\
		$S_0$ & $(U)(R)(A)(S)(B)(W)$ &$\sqrt{\frac{1}{3}}\ket{\bot0\bot0\bot\bot}+\sqrt{\frac{2}{3}}\ket{\bot1\bot0\bot\bot}$\\
		
		$S_1$ & $(U')(R''A''')(S'')(B')(W)$ & $\sqrt{\frac{2}{3}}\ket{++00\bot}+\sqrt{\frac{1}{6}}\ket{++11\bot}-\sqrt{\frac{1}{6}}\ket{--11\bot}$\\
		
		$S_2$ & $(U)(R')(A'')(S'')(B')(W)$ &$\sqrt{\frac{1}{3}}\ket{\bot0000\bot}+\sqrt{\frac{1}{3}}\ket{\bot1100\bot}+\sqrt{\frac{1}{3}}\ket{\bot1111\bot}$\\
		
		$S_3$ & $(U)(R')(A'')(S'''B'')(W')$ & $\sqrt{\frac{1}{6}}\ket{\bot00++}+\sqrt{\frac{1}{6}}\ket{\bot00--}+\sqrt{\frac{2}{3}}\ket{\bot11++}$\\
		
		$S_4$ & $(U')(R''A''')(S'''B'')(W')$& $\sqrt{\frac{3}{4}}\ket{++++}+\sqrt{\frac{1}{12}}\ket{++--}-\sqrt{\frac{1}{12}}\ket{--++}+\sqrt{\frac{1}{12}}\ket{----}$
	\end{tabular}
    \caption{The relevant timelike surfaces $S$ and their states $\Psi_S$ in the consistent state assignment corresponding to the Frauchiger-Renner experiment. Parentheses indicate the grouping of subsystems for the purpose of defining basis states and $\ket{\pm}$ is shorthand for $\frac{1}{\sqrt{2}}\ket{00}\pm\frac{1}{\sqrt{2}}\ket{11}$ for the two-system groups.}
    \label{tab:FRstates}
\end{table}
Writing $\Psi=\Psi^{FR}$, we see by direct computation that
\begin{align}
	\Psi_{S_3}^{A\rightarrow W}\circ\Psi_{S_2}^{B\rightarrow A}\circ\Psi_{S_1}^{U\rightarrow B}\left(\ket{-}\right)=\ket{+}\label{eq:arg1}
\end{align}
whereas
\begin{align}
	\Psi_{S_4}^{U\rightarrow W}\left(\ket{-}\right)=\frac{1}{2}\ketbra{+}{+}+\frac{1}{2}\ketbra{-}{-}.\label{eq:arg2}
\end{align}
Thus the Frauchiger-Renner protocol furnishes a counterexample to the claim.
\end{proof}
Note that Eq. (\ref{eq:arg1}) corresponds to the structure of Argument 1, and Eq. (\ref{eq:arg2}) to the structure of Argument 2. From this point of view, the Frauchiger-Renner protocol simply witnesses the fact that such iterated relative state computations as used in Eq. (\ref{eq:arg1}) are not licensed by quantum mechanics, at least in conjunction with the interpretation that eigenstates of measurement operators may be associated with certain outcomes, which is certainly assumed in the original argumentation.

A point to note is that the mere non-equality of iterated relative states along different paths is not unique to quantum mechanics. A similar non-equality can be obtained in the classical setting with a suitable definition of classical relative states (conditional probability distributions). Consider for example the situation in which Alice flips a fair coin, passing it to Bob if she obtains the outcome ``heads", whereas if she obtains the outcome ``tails", she flips it again before passing it along. If Alice obtains the outcome ``heads" on her first flip, she may reason that Bob will certainly see ``heads". Then she knows that he will assign a probability 2/3 to her having seen ``heads" on the first flip, and a probability 1/3 to her having seen ``tails". But Alice assigns herself a probability of 1 of having seen ``tails". Of course there is nothing odd about this discrepancy, and we do find ourselves frequently in the position of knowing someone else to be uncertain about a fact about which we are ourselves certain. What is strange about the Frauchiger-Renner scenario is that a more circuitous route, passing through more agents, actually seems to yield a certainty not licensed by the direct argument.

Returning to the quantum setting, we may see that if we restrict consideration to a single spacelike surface, argumentation by iterated relative state construction is licensed:
\begin{theorem}
Let $\mathcal{G}$ be a causal structure and $(\mathcal{H},\Psi)$ a consistent state assignment. Let $A_1,  \ldots, A_n$ be edges of $\mathcal{G}$ and suppose that all of these are contained in a single spacelike surface. For some  $\phi^{A_1}\in \mathcal{H}_{A_1}$, suppose $\phi^{A_n}=\Psi^{A_{n-1}\rightarrow A_n}\circ\cdots\circ \Psi^{A_1\rightarrow A_2}(\phi^{A_1})$ is defined and pure. Then $\Psi^{A_1\rightarrow A_{n}}(\phi)=\phi^{A_n}$.
\end{theorem}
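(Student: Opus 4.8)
The plan is to reduce the statement to a fact about one pure state. Because $A_1,\ldots,A_n$ all lie on a single spacelike surface $S$, every map occurring in the chain $\Psi^{A_{n-1}\rightarrow A_n}\circ\cdots\circ\Psi^{A_1\rightarrow A_2}$, as well as the direct map $\Psi^{A_1\rightarrow A_n}$, is computed from the one vector $\Psi_S\in\mathcal{H}_S$ (using surface-independence of the relative state map, and the fact that $S$ contains every $A_i$, so in particular each pair $A_i,A_{i+1}$ and the pair $A_1,A_n$); no ``updated'' states enter. Write $\mathcal{H}_S=\mathcal{H}_{A_1}\otimes\cdots\otimes\mathcal{H}_{A_n}\otimes\mathcal{H}_R$, where $R$ collects the remaining edges of $S$, and set $\ket{v}=(\bra{\phi^{A_1}}_{A_1}\otimes 1)\ket{\Psi_S}$. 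That the chain is ``defined and pure'' means $\ket{v}\neq 0$ and, for each $i$, $(\bra{\phi^{A_i}}\otimes 1)\ket{\Psi_S}\neq 0$ and $\Psi^{A_i\rightarrow A_{i+1}}(\phi^{A_i})$ is a pure state $\ketbra{\phi^{A_{i+1}}}{\phi^{A_{i+1}}}$. The goal becomes: show the (normalized) reduced density operator of $\ket{v}$ on $\mathcal{H}_{A_n}$ equals $\ketbra{\phi^{A_n}}{\phi^{A_n}}$.

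The only external ingredient is the elementary fact, immediate from the Schmidt decomposition, that a pure vector on a bipartite tensor product whose marginal on one factor has rank one is a product vector across that cut. First I would apply it to $\ket{v}$: purity of $\Psi^{A_1\rightarrow A_2}(\phi^{A_1})$ forces $\ket{v}\propto\ket{\phi^{A_2}}_{A_2}\otimes\ket{\chi_2}$. I would then prove, by induction on $k$, the claim
\begin{equation}
\ket{v}\ \propto\ \ket{\phi^{A_2}}_{A_2}\otimes\cdots\otimes\ket{\phi^{A_k}}_{A_k}\otimes\ket{\chi_k}
\end{equation}
with $\ket{\chi_k}\in\mathcal{H}_{A_{k+1}}\otimes\cdots\otimes\mathcal{H}_{A_n}\otimes\mathcal{H}_R$, nonzero proportionality constant, and the $\phi^{A_i}$ equal to the successive iterates of the relative state maps.

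The inductive step is where the single-surface hypothesis does its work, and I expect it to be the main obstacle: since $\Psi^{A_k\rightarrow A_{k+1}}$ projects $\ket{\Psi_S}$ on the $A_k$ factor rather than projecting the already-reduced vector $\ket{\chi_{k-1}}$, one must reconcile two a priori unrelated factorizations of $\ket{\Psi_S}$. Concretely, purity of $\Psi^{A_k\rightarrow A_{k+1}}(\phi^{A_k})$ together with the lemma gives $(\bra{\phi^{A_k}}_{A_k}\otimes 1)\ket{\Psi_S}\propto\ket{\phi^{A_{k+1}}}_{A_{k+1}}\otimes\ket{\eta}$ with $\ket{\eta}\in\mathcal{H}_{A_1}\otimes\cdots\otimes\mathcal{H}_{A_{k-1}}\otimes\mathcal{H}_{A_{k+2}}\otimes\cdots\otimes\mathcal{H}_{A_n}\otimes\mathcal{H}_R$. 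I would then compute the doubly projected vector $(\bra{\phi^{A_1}}_{A_1}\bra{\phi^{A_k}}_{A_k}\otimes 1)\ket{\Psi_S}$ in two ways: projecting $A_k$ inside the inductive factorization of $\ket{v}$, which simply deletes the $\ket{\phi^{A_k}}$ factor since $\braket{\phi^{A_k}}{\phi^{A_k}}=1$, and projecting $A_1$ inside the factorization just obtained. Equating the two expressions, passing to the marginal on $\mathcal{H}_{A_2}\otimes\cdots\otimes\mathcal{H}_{A_{k-1}}$ (this sub-step is vacuous when $k=2$), applying the lemma once more there, and cancelling the common unit factors $\ket{\phi^{A_2}},\ldots,\ket{\phi^{A_{k-1}}}$, one reads off $\ket{\chi_k}\propto\ket{\phi^{A_{k+1}}}_{A_{k+1}}\otimes\ket{\chi_{k+1}}$, which is the claim for $k+1$. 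The nonvanishing guaranteed by ``defined'' is exactly what licenses each cancellation and keeps every proportionality constant nonzero.

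Taking $k=n$, the claim says $\ket{v}$ is a product vector $\ket{\phi^{A_2}}\otimes\cdots\otimes\ket{\phi^{A_n}}\otimes\ket{\chi}$ up to a nonzero scalar; hence its normalized marginal on $\mathcal{H}_{A_n}$ is $\ketbra{\phi^{A_n}}{\phi^{A_n}}$, i.e.\ $\Psi^{A_1\rightarrow A_n}(\phi^{A_1})=\phi^{A_n}$, which is what was to be shown. I do not expect any difficulty beyond the bookkeeping in the inductive step; in particular no analogue of the obstruction in the Non-theorem can arise precisely because a single global state $\Psi_S$ underlies all of the maps.
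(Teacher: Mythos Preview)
Your proof is correct and takes essentially the same approach as the paper: both exploit the Schmidt-type fact that a rank-one marginal forces a product factorization, and iterate within the single surface $S$ to conclude that $(\bra{\phi^{A_1}}\otimes 1)\ket{\Psi_S}$ factors as $\ket{\phi^{A_2}}\cdots\ket{\phi^{A_n}}\otimes\ket{\chi}$. The paper's execution is slightly more streamlined---it expands $\Psi_S$ in a product basis containing each $\phi^{A_i}$ and observes that purity of $\Psi^{A_k\to A_{k+1}}(\phi^{A_k})$ forces every term of $\Psi_S$ with $\phi^{A_k}$ in slot $k$ to have $\phi^{A_{k+1}}$ in slot $k+1$, so the induction closes without your double-projection maneuver---but the content is the same.
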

\begin{proof}
For $i=2,\ldots, n$, define
\begin{align}
\phi^{A_i}&=\Psi_S^{A_{i-1}\rightarrow A_i}(\phi^{A_{i-1}})
\end{align}
For each $A_i$, choose a basis of $\mathcal{H}_{A_i}$ that includes $\phi^{A_i}$. We can decompose $\Psi_S$ in the product basis (including also a basis for $\mathcal{H}_{S\setminus \set{A_1,A_2,\ldots,A_n}}$). Suppose that $\Psi_S$ has support on a basis state $\ket{\phi^{A_1}\phi_2\phi_3\ldots\phi_n\phi_E}$. Because $\Psi_S^{A_{1}\rightarrow A_2}(\phi^{A_{1}})=\phi^{A_2}$, we must have $\phi_2=\phi^{A_2}$. Proceeding in this manner, we see that we can write
\begin{align}
\Psi_S&=\alpha\ket{\phi^{A_1}}\ket{\phi^{A_2}\phi^{A_3}\ldots\phi^{A_n}\phi_E}+\beta \ket{\phi_\perp^{A_1}}\ket{\Phi'}
\end{align}
and see directly that $\Psi_S^{A_1\rightarrow A_n}=\phi^{A_n}$.
\end{proof}
Thus we locate the source of the Frauchiger-Renner paradox in the use of the iterated relative state construction between multiple spacelike surfaces. Note also that the argument used in this proof, where a state $\Psi_S$ is built up from which the direct relative state can be directly computed, may be reproduced in the classical stochastic setting without restriction to spacelike surfaces. In that case, the quantum state would be replaced by a joint probability distribution over assignments of definite classical states to each system at each time.

\subsection{Quantum logical formulation}
The Frauchiger-Renner contradiction may also be formalized in the language of quantum logic, as initiated in \cite{Birkhoff1936}. This program explores the ways in which quantum mechanical structures can be used to provide semantics to formal languages of propositions, and explores deductive systems that respect this semantics\footnote{A friendly introduction to the notions of syntax and semantics, and related mathematical techniques, can be found in \cite{Winskel1993}.}. In order to make contact with the notion of incompatible spacelike surfaces, we need a syntax that captures the notion of a causal structure, that is, a method for attaching a set of propositions to a causal structure in a way that respects the internal form of these structures.
\begin{definition}
	Let $\mathcal{G}$ be a causal structure. Fix a set $\phi_1,\ldots,\phi_N$ of atomic propositions and sets $\mathcal{A}_1, \ldots, \mathcal{A}_N$ of edges of $\mathcal{G}$ such that for each $i$ there is some spacelike surface $S_i\supset \mathcal{A}_i$. Define the set $\mathcal{L}_\mathcal{G}$ of well-formed propositions, and their supports, inductively as follows:
	\begin{enumerate}
		\item For each $i=1,\ldots, N$, $\phi_i$ is a proposition with support $\mathcal{A}_i$.
		\item If $\phi$ is a proposition with support $\mathcal{A}$, then $\neg\phi$ is also a proposition with support $\mathcal{A}$.
		\item If $\phi$ and $\phi'$ are propositions with supports $\mathcal{A}$ and $\mathcal{A}'$, and there is a spacelike surface $S$ such that $\mathcal{A},\mathcal{A}'\subseteq S$, then $\phi\wedge\phi'$ is a proposition with support $\mathcal{A}\cup\mathcal{A}'$.
	\end{enumerate}\label{def:syntax}
\end{definition} 
Formally speaking, these expressions are purely syntactic --- they are simply formal objects --- but of course they are meant to be read and interpreted (for the moment only informally) in the usual way. A proposition with support $\mathcal{A}$ is a statement about properties localized to the collection of particular systems at particular times contained in $\mathcal{A}$. The operators $\neg$ and $\wedge$ indicate negation (not) and conjunction (and). This is a relatively austere syntax\footnote{For example, we could extend the syntax with logical constants for true and false, localized either to each edge or to each spacelike surface. We could also formally include operators $\vee$ and $\rightarrow$ so that disjunction and implication were included at the formal syntactic level. More interestingly, there is no way within this framework to construct propositions with the informal interpretation ``System S had property $p_1$ at time $t_1$ and property $p_2$ at time $t_2$", as no spacelike surface can contain both of the corresponding edges, except in the trivial case in which S does not participate in any events between $t_1$ and $t_2$. We could imagine various ways of extending the syntax to include propositions of this form. On the other hand, consider the project of describing a complete theory of the universe, including the mental states of all agents. For a proposition like this to be known or believed by an agent, it must in fact correspond to a single-time proposition with support corresponding to some subsystem of that agent's brain (or physical reasoning apparatus of whatever kind). In other words, there must be a \textit{physical record at one time} of the history of system S and its properties (for a discussion of the Frauchiger-Renner thought experiment along these lines, see \cite{Waaijer2021}). Thus it is perhaps reasonable to take this apparent weakness of the formal syntax instead as a strength.}, but will suffice for our purposes, and we will freely use the abbreviation $\phi\longrightarrow\psi$ for the material implication $\neg(\phi\wedge\neg\psi)$.

We now need a way of endowing these propositions with meaning, that is, a semantics. To define a quantum semantics for this syntax, we choose a consistent Hilbert space assignment $\mathcal{H}$ and a consistent state assignment $\Psi$ for the causal structure $\mathcal{G}$.

\begin{definition}
	Let $\mathcal{G}$ be a causal structure and $(\mathcal{H}, \Psi)$ a consistent state assignment. A denotation is an assignment $\denotation{\phi}$ to each $\phi\in \mathcal{L}_\mathcal{G}$ such that
	\begin{enumerate}
		\item $\denotation{\phi_i}$ is a subspace of $\mathcal{H}_{\mathcal{A}_i}$ for $i=1,\ldots, N$
		\item $\denotation{\phi\wedge\phi'}=\denotation{\phi}\cap\denotation{\phi'}$
		\item $\denotation{\neg\phi}=\denotation{\phi}^\perp$
	\end{enumerate}
	The corresponding valuation is the function $\mathcal{V}:\mathcal{L}_\mathcal{G}\rightarrow\set{\texttt{true}, \texttt{false}, \texttt{possible}}$ defined by
	\begin{align}
		\mathcal{V}(\phi)=\begin{cases}
			\texttt{false} & \norm{\denotation{\phi}\Psi_S} =0\\
			\texttt{possible} & 0<\norm{\denotation{\phi}\Psi_S} <1\\
			\texttt{true} &  \norm{\denotation{\phi}\Psi_S} =1
		\end{cases},
	\end{align}
	where $S$ is any spacelike surface containing the support of $\phi$. \label{def:semantics}
\end{definition}
What precisely is meant here by saying that a proposition $\phi$ is ``possible" is of course a contentious matter tied to decades of debate over the Everettian interpretation of quantum mechanics. A straightforward way to interpret the notion of possibility (and indeed the only one that could in principle be probed by experimental realizations of the Frauchiger-Renner protocol) is that it means possibility that an external observer (beyond Alice, Bob, Ursula, and Wigner) would obtain the outcome ``true" upon making a measurement of the system corresponding to the projector $\denotation{\phi}$. But this is not a particularly appealing interpretation of this possibility from the point of view of examining the potential of quantum mechanics to serve as a complete theory --- after all, relative to a complete theory of the universe, there \textit{is} no external observer to make such measurements.  In any case, I don't claim to solve the interpretational problems of quantum mechanics here, so the semantics of the metalanguage employed in Def. \ref{def:semantics} will have to remain unspecified.

Returning to the task of formalization, and to the syntactic side of things, we can introduce a notion of syntactically valid deduction. This will allow the arguments of Frauchiger and Renner to be formalized in a way that demonstrates what precisely goes wrong.
\begin{definition}
	Let $\mathcal{G}$ be a causal structure. A deduction is a finite sequence $\Gamma_0, \ldots,\Gamma_M$ of lists of propositions in $\mathcal{L}_\mathcal{G}$ such that $\Gamma_{n+1}$ is equal to $\Gamma_n$ with a single proposition appended to the end, with that proposition following from one or two of the propositions in $\Gamma_n$ by one of the following deduction rules:
	\begin{enumerate}
		\item $\phi\wedge \phi'\implies\phi$
		\item $\phi\wedge \phi'\implies\phi'$
		\item $\phi, \,\phi'\implies \phi\wedge\phi'$ if the supports of both $\phi$ and $\phi'$ are contained in some spacelike surface $S$
		\item $\phi,\,\phi\longrightarrow\phi'\implies \phi'$
	\end{enumerate}
	The list $\Gamma_0$ is called the set of premises, and the last element of $\Gamma_M$ is called the conclusion. \label{def:deductive}
\end{definition}
Relative to a semantics that assigns binary truth values (true or false) to propositions, a clear requirement for a reliable syntactic deductive system is that if it admits a valid deduction from a set of premises to a conclusion, then whenever all the premises are true, the conclusion is also true --- in other words, that the deductive system is sound. Because we are using the slightly more complex set of truth values that includes ``possible", this requirement should be extended. In particular, if we have a deduction whose premises consist of a set of true propositions and at most a single possible proposition, then the conclusion should be possible or true\footnote{The restriction to at most one merely possible proposition is important given the deductive rules allowed in our system. It is not necessarily the case that the conclusion of a deduction whose premises are possible or true is possible or true. Consider the informally stated deduction: ``$x$ is an orange" and ``$x$ is an apple", therefore ``$x$ is an orange and $x$ is an apple". Clearly there are situations in which we would like to say that both of the premises are possible -- we are reaching into a fruit bowl after coming home on a dark winter night, still wearing our heavy mittens -- but the conclusion is false -- no fruit is both orange and apple. An alternative route would be to include directly at the syntactic level a modal operator corresponding to possibility, and deny the validity of the deduction ``possibly $\phi$" and ``possibly $\phi'$", therefore ``possibly $\phi\wedge\phi'$".}. The deductive system of Def. \ref{def:deductive} can be shown to be unsound in this sense, using the Frauchiger-Renner protocol as a counterexample.
\begin{nontheorem}
	Let $\mathcal{G}$ be a causal structure and $\mathcal{V}$ a valuation on $\mathcal{L}_\mathcal{G}$ defined in terms of some consistent state assignment. Consider a deduction, and suppose that one of the propositions in the premise is possible with respect to $\mathcal{V}$ and the rest are true. Then the conclusion is possible or true with respect to $\mathcal{V}$.
\end{nontheorem}

\begin{proof}
Consider the causal structure and consistent state assignment defined in the proof of the last non-theorem and illustrated in Fig. \ref{Fig:causal_structure} and Table \ref{tab:FRstates}. Define a set of atomic propositions, along with their supports, projectors onto their denotations, and their intended informal interpretations:
\begin{center}
	\def\arraystretch{1.25}
\begin{tabular}{cccp{80mm}}
	Proposition & Support & Denotation & Informal meaning\\
	$\mathcal{U}_-$ & U' & $\ketbra{-}{-}_{U'}$ &Ursula obtains the outcome minus upon measuring Alice and spin R\\
	$\mathcal{B}_1$ & B' & $\ketbra{1}{1}_{B'}$ &Bob obtains the outcome 1 upon measuring spin S\\ 
	$\mathcal{A}_1$ & A'' & $\ketbra{1}{1}_{A''}$&After Alice prepares the spin $S$, she is in the state corresponding to knowing that she measured R to be 1 and prepared S appropriately\\ 
	$\mathcal{W}_-$ & W'& $\ketbra{-}{-}_{W'}$ &Wigner obtains the outcome minus upon measuring Bob and spin S\\
	$\mathcal{W}_+$ & W' & $\ketbra{+}{+}_{W'}$ &Wigner obtains the outcome plus upon measuring Bob and spin S
\end{tabular}
\end{center}
Then we can construct the following deduction:
\begin{center}
\small
	\def\arraystretch{1.25}
	\setlength{\tabcolsep}{0.45em}
\begin{tabular}{ccccccccccc}
$\boxed{\mathcal{U}_-\wedge \mathcal{W}_-}$ & $\mathcal{U}_-\longrightarrow \mathcal{B}_1$ & $\mathcal{B}_1\longrightarrow\mathcal{A}_1$ & $\mathcal{A}_1\longrightarrow\mathcal{W}_+$ \\
$\mathcal{U}_-\wedge \mathcal{W}_-$ & $\boxed{\mathcal{U}_-\longrightarrow \mathcal{B}_1}$ & $\mathcal{B}_1\longrightarrow\mathcal{A}_1$ & $\mathcal{A}_1\longrightarrow\mathcal{W}_+$& $\boxed{\mathcal{U}_-}$ \\
$\mathcal{U}_-\wedge \mathcal{W}_-$ & $\mathcal{U}_-\longrightarrow \mathcal{B}_1$ & $\boxed{\mathcal{B}_1\longrightarrow\mathcal{A}_1}$ & $\mathcal{A}_1\longrightarrow\mathcal{W}_+$& $\mathcal{U}_-$ & $ \boxed{\mathcal{B}_1}$ \\
$\mathcal{U}_-\wedge \mathcal{W}_-$ & $\mathcal{U}_-\longrightarrow \mathcal{B}_1$ & $\mathcal{B}_1\longrightarrow\mathcal{A}_1$ & $\boxed{\mathcal{A}_1\longrightarrow\mathcal{W}_+}$& $\mathcal{U}_-$ & $ \mathcal{B}_1$ & $\boxed{\mathcal{A}_1}$  \\
$\boxed{\mathcal{U}_-\wedge \mathcal{W}_-}$ & $\mathcal{U}_-\longrightarrow \mathcal{B}_1$ & $\mathcal{B}_1\longrightarrow\mathcal{A}_1$ & $\mathcal{A}_1\longrightarrow\mathcal{W}_+$& $\mathcal{U}_-$ & $ \mathcal{B}_1$ & $\mathcal{A}_1$ & $\mathcal{W}_+$ \\
$\mathcal{U}_-\wedge \mathcal{W}_-$ & $\mathcal{U}_-\longrightarrow \mathcal{B}_1$ & $\mathcal{B}_1\longrightarrow\mathcal{A}_1$ & $\mathcal{A}_1\longrightarrow\mathcal{W}_+$&
$\mathcal{U}_-$ & $ \mathcal{B}_1$ & $\mathcal{A}_1$ & $\boxed{\mathcal{W}_+}$ & $\boxed{\mathcal{W}_-}$ \\
$\mathcal{U}_-\wedge \mathcal{W}_-$ & $\mathcal{U}_-\longrightarrow \mathcal{B}_1$ & $\mathcal{B}_1\longrightarrow\mathcal{A}_1$ & $\mathcal{A}_1\longrightarrow\mathcal{W}_+$&
$\mathcal{U}_-$ & $ \mathcal{B}_1$ & $\mathcal{A}_1$ & $\mathcal{W}_+$ & $\mathcal{W}_-$ & $\mathcal{W}_+\wedge\mathcal{W}_-$
\end{tabular}
\end{center}
Boxed propositions in one line are used to deduce the conclusion of the next line. Now it is easy to see that each of the implications in the premise is true (this is exactly what licenses the relative state assignments used in the FR argument and in the previous section), while the first premise is possible (because $\Psi_{S_4}$ has support on $\ket{-}_U\otimes\ket{-}_W$). However, the conclusion $\mathcal{W}_+\wedge\mathcal{W}_-$ is false, as the corresponding subspace $\denotation{\mathcal{W}_-}\cap\denotation{W}_+$ is the trivial subspace $\set{0}\subseteq \mathcal{H}_{W'}$.
\end{proof}
If we restrict the allowed sets of premises to those for which all propositions are supported on subsets of a single spacelike surface, the corresponding theorem would indeed hold. We have the following soundness property:
\begin{theorem}
	Let $\mathcal{G}$ be a causal structure and $\mathcal{V}$ a valuation on $\mathcal{L}_\mathcal{G}$ defined in terms of some consistent state assignment. Consider a deduction, and suppose that one of the propositions in the premise is possible with respect to $\mathcal{V}$ and the rest are true. Moreover, suppose all propositions in the premise are supported on the same spacelike surface. Then the conclusion is possible or true with respect to $\mathcal{V}$.
\end{theorem}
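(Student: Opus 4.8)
The plan is to reduce the entire deduction to reasoning inside the single Hilbert space $\mathcal{H}_S$ attached to the common surface $S$, where the three-valued valuation is just an ordinary quantum-logical valuation relative to the single vector $\Psi_S$, and then to run a quantum refinement of the soundness of classical propositional logic.

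First I would show, by induction on the length $M$ of the deduction, that if every premise in $\Gamma_0$ has support contained in $S$ then so does every proposition occurring in any $\Gamma_n$. Rules 1 and 2 only shrink the support; rule 3 replaces supports $\mathcal{A},\mathcal{A}'\subseteq S$ by their union, still $\subseteq S$; and for rule 4 one uses that, by Definition \ref{def:syntax}, the implication $\phi\longrightarrow\phi'=\neg(\phi\wedge\neg\phi')$ has support equal to the union of the supports of $\phi$ and $\phi'$, so if $\phi\longrightarrow\phi'$ is supported on $S$ then so is its consequent $\phi'$. Hence every denotation $\denotation{\psi}$ arising in the deduction lifts, by tensoring with the identity on the complementary edges, to a subspace of $\mathcal{H}_S$, and $\mathcal{V}(\psi)$ is read off the \emph{same} vector $\Psi_S$: it is \texttt{true}, \texttt{possible}, or \texttt{false} according as the projection of $\Psi_S$ onto the lifted $\denotation{\psi}$ has norm $1$, strictly between $0$ and $1$, or $0$. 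This is exactly the move that is illegitimate in the preceding non-theorem, whose premises sit on incomparable surfaces with no common state to evaluate them against.

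Inside $\mathcal{H}_S$ the connectives act on the lifted denotations as $\wedge=\cap$ and $\neg=(\cdot)^{\perp}$. I would then take the orthocomplemented sublattice generated by the finitely many denotations occurring in the deduction; when these pairwise commute it is Boolean, and I let $\set{a}$ be its atoms, viewed as mutually orthogonal projectors with $\sum_a a = 1$, so that $\Psi_S=\sum_a a\Psi_S$. Each atom $a$ with $a\Psi_S\neq 0$ gives a two-valued assignment $v_a$ to the propositions of the deduction (with $v_a(\psi)=\texttt{true}$ iff $a$ lies below the lifted $\denotation{\psi}$), respecting $\wedge$, $\neg$, and hence material implication; moreover $\mathcal{V}(\psi)=\texttt{true}$ (resp.\ \texttt{false}) exactly when every such $v_a$ assigns $\psi$ that value, and $\mathcal{V}(\psi)=\texttt{possible}$ when they disagree. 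Since the unique \texttt{possible} premise $\pi$ has $\mathcal{V}(\pi)=\texttt{possible}$, pick an atom $a_0$ with $a_0\Psi_S\neq 0$ and $v_{a_0}(\pi)=\texttt{true}$; the remaining premises being \texttt{true}, $v_{a_0}$ satisfies every premise, and since the four deduction rules are all classically valid, $v_{a_0}$ satisfies the conclusion $\chi$. Then $a_0$ lies below $\denotation{\chi}$, so the projection of $\Psi_S$ onto $\denotation{\chi}$ is nonzero and $\mathcal{V}(\chi)\in\set{\texttt{true},\texttt{possible}}$, as required.

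The main obstacle is the passage to atoms of a \emph{Boolean} sublattice: this is immediate when the projectors onto the relevant denotations commute --- as they do in the Frauchiger--Renner example, where the denotations in question are spanned by product-basis vectors --- but in general the lifted denotations generate only an orthomodular lattice, conjunction introduction and material-implication modus ponens need not be quantum-logically sound in isolation, and the argument above must be read under the (mild, and in the cases of interest automatic) hypothesis that the denotations generate a distributive sublattice. Lacking that, I would instead imitate the proof of the preceding theorem: induct along the deduction while carrying a product-basis decomposition of $\Psi_S$ adapted to the premises, and check that each rule preserves the property that the denotation of the newly derived proposition still meets the distinguished branch of $\Psi_S$ (for rules 1, 2, 4), or that the two conjuncts are jointly compatible with it (rule 3). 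I expect the rule-3 and rule-4 cases to require the most care, and to be where this hypothesis is really needed --- fittingly, since these are precisely the inferences whose unrestricted, cross-surface use produces the paradox.
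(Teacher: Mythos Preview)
Your route is genuinely different from the paper's. The paper does not pass to a Boolean subalgebra or invoke atoms and classical soundness; it instead tracks the single invariant $\denotation{\wedge\Gamma_n}$, the intersection of the (lifted) denotations of all propositions present at step $n$, and argues that each deduction rule leaves this subspace unchanged. From $\denotation{\wedge\Gamma_M}=\denotation{\wedge\Gamma_0}$ and $\chi\in\Gamma_M$ one gets $\denotation{\chi}\supseteq\denotation{\wedge\Gamma_0}$, and the hypothesis on the premises is then used to say that $\Psi_S$ has a nonzero component in $\denotation{\wedge\Gamma_0}$. Rules 1--3 are immediate; for rule~4 the paper computes $\denotation{\phi\wedge(\phi\to\phi')\wedge\phi'}=\denotation{\phi}\cap\denotation{\phi'}$ directly from the subspace operations.

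Your worry about non-commuting denotations is well placed, and the paper's argument in fact leans on the same hypothesis you isolate. The rule-4 invariance really requires $\denotation{\phi}\cap\denotation{\phi\to\phi'}\subseteq\denotation{\phi'}$, i.e.\ $A\cap(A\cap B^{\perp})^{\perp}\subseteq B$, which fails for incompatible subspaces (take $A$ a line in $\mathbb{C}^2$ distinct from both $B$ and $B^{\perp}$: then $A\cap B^{\perp}=0$, so the left side is $A\not\subseteq B$) but holds whenever the projectors onto $A$ and $B$ commute; likewise the step from ``all premises true but one, which is possible'' to ``$\Psi_S$ meets $\denotation{\wedge\Gamma_0}$'' uses compatibility. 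So both arguments are clean exactly under the distributivity assumption you flag --- automatically satisfied in the Frauchiger--Renner instance, where the atomic denotations sit on distinct edges or are orthogonal on the same edge. What your atomic decomposition buys is a conceptually transparent reduction to classical propositional soundness; what the paper's route buys is a short lattice-theoretic calculation that never leaves the subspace picture and makes the role of the common surface $S$ (as the place where $\wedge\Gamma_0$ is well formed) maximally explicit.
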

\begin{proof}
Because all propositions in the premise are supported on a single spacelike surface $S$, we may form the conjunction of all of these propositions. As each of the deduction rules preserves this property, we may form the conjunction of all propositions in each step of the deduction. Now Consider a list of propositions $\Gamma'$ derived from another list $\Gamma$ by applying one of the deduction rules. Denoting by $\wedge \Gamma$ the conjunction of all propositions in $\Gamma$, we have $\denotation{\wedge\Gamma'}=\denotation{\wedge\Gamma}$. This holds for rules 1-2 because $\denotation{\phi}\supseteq \denotation{\phi\wedge\phi'}=\denotation{\phi}\cap\denotation{\phi}'$. It holds for rule 3 directly, and for rule 4 we have
\begin{align}
\denotation{\phi\wedge(\phi\longrightarrow\phi')\wedge\phi'}&=\denotation{\phi}\cap\denotation{\neg\left(\phi\wedge\neg\phi'\right)}\cap\denotation{\phi'}=\denotation{\phi}\cap\denotation{\phi\wedge\neg\phi'}^\perp\cap\denotation{\phi'}\\
&=\denotation{\phi}\cap\left(\denotation{\phi}\cap\denotation{\neg\phi'}\right)^\perp\cap\denotation{\phi'}=\denotation{\phi}\cap\left(\denotation{\phi}^\perp\cup\denotation{\phi'}\right)\cap\denotation{\phi'}\\
&=\denotation{\phi}\cap\denotation{\phi'}=\denotation{\phi\wedge \phi'}.
\end{align}
Thus if we take $\Gamma_0$ to be the premise and $\Gamma_n$ the final list of propositions in a deduction, we have $\denotation{\wedge\Gamma_0}=\denotation{\wedge\Gamma_n}$. As $\Gamma_0\subseteq \Gamma_n$, this means that if $\psi$ is the final element of $\Gamma_n$, i.e. the conclusion of the deduction, we have $\denotation{\psi}\supseteq\denotation{\wedge \Gamma_0}$. Thus if the state $\Psi_S$ has support in the subspace $\denotation{\wedge\Gamma_0}$, as it must by assumption that all the propositions in the premise are true except at most one, which is possible, then it has support in the subspace $\denotation{\psi}$. Then $\psi$ is possible or true.
\end{proof}
Taking these two results together, we can identify the use of propositions supported on multiple spacelike surfaces within a single deduction as the source of the Frauchiger-Renner paradox. Again, we can see that an analogous argument holds in the classical stochastic case without restriction to a single spacelike surface, because the natural semantics assigns to the entire causal structure a single sample space with elements corresponding to choices of definite classical state for each system at each time. Note that both this non-theorem and the previous one can be proven using counterexamples of only four qubits, with essentially the same structure as the FR counterexample. I have maintained the systems used in the initial thought experiment in order to facilitate direct comparison with the original argument.

\section{Discussion}
In this essay I have presented two closely related ways of formalizing the Frauchiger-Renner argument in order to pinpoint precisely where it goes wrong. This formalization removes any ambiguity about the source of the contradiction. Perhaps a good characterization of the philosophical error of Frauchiger and Renner is that they try to  have their cake and eat it too with respect to materialism. They claim to analyze a system in which the agents doing measurement, reasoning, and prediction are simply quantum systems like any other, but treat properties like ``Alice believes Wigner will measure plus" in a way different from how they would presumably treat properties like ``the spin S is in the plus state". Thus a sort of underspecified property dualism is smuggled in.

A great deal of prior work has addressed the Frauchiger-Renner argument, some of it touching on similar concerns to those discussed here. \cite{Leegwater2022} analyzes a multi-agent system and establishes that the Born rule cannot hold in all reference frames, but does not explicitly study the chaining of incompatible inferences. \cite{Nurgalieva2019} also take the approach of demonstrating the unsoundness of a quantum semantics for a formal logical structure, introducing the notion of a trust relation between agents and arguing that the non-transitivity of this relation is at the heart of the paradox. \cite{Fraser2023} study an epistemic modal logic and a formalization of the the argument within it, deriving a contradiction. Though related to the discussion here, these analyses do not (at least directly) turn on the notion of combining inferences along different spacelike surfaces, and thus this essay provides a new perspective on the Frauchiger-Renner debate.

\bibliography{refs}

\end{document}